\newcommand{\mathsym}[1]{{}}
\let\oldbfseries=\bfseries
\let\oldmdseries=\mdseries
\let\oldnormalfont=\normalfont
\renewcommand{\bfseries}{\oldbfseries\boldmath}
\renewcommand{\mdseries}{\oldmdseries\unboldmath}
\renewcommand{\normalfont}{\oldnormalfont\unboldmath}
\numberwithin{equation}{section}
\newcommand\hypersetup[1]{}\fi
\DeclareMathSymbol{\Gamma}{\mathalpha}{letters}{"00}
\DeclareMathSymbol{\Delta}{\mathalpha}{letters}{"01}
\DeclareMathSymbol{\Theta}{\mathalpha}{letters}{"02}
\DeclareMathSymbol{\Lambda}{\mathalpha}{letters}{"03}
\DeclareMathSymbol{\Xi}{\mathalpha}{letters}{"04}
\DeclareMathSymbol{\Pi}{\mathalpha}{letters}{"05}
\DeclareMathSymbol{\Sigma}{\mathalpha}{letters}{"06}
\DeclareMathSymbol{\Upsilon}{\mathalpha}{letters}{"07}
\DeclareMathSymbol{\Phi}{\mathalpha}{letters}{"08}
\DeclareMathSymbol{\Psi}{\mathalpha}{letters}{"09}
\DeclareMathSymbol{\Omega}{\mathalpha}{letters}{"0A}
\newcommand{\gen}[1]{\mathrm{#1}}
\newcommand{\dd}{\mathrm{d}}
\newcommand{\ii}{\mathrm{i}}
\newcommand*\widebar[1]{%
  \hbox{%
    \vbox{%
      \hrule height 0.5pt 
      \kern0.25ex
      \hbox{%
        \kern-0.3em
        \ensuremath{#1}%
        \kern-0.1em
      }%
    }%
  }%
}
\newcommand{\ket}[1]{\left|#1\right\rangle}      
\newcommand{\bra}[1]{\left\langle #1\right|}     
\newcommand{\alg}[1]{\mathfrak{#1}}
\newcommand{\beq}{\begin{equation}}
\newcommand{\eeq}{\end{equation}}
\def\[{\begin{equation}}
\def\]{\end{equation}}
\def\<{\begin{eqnarray}}
\def\>{\end{eqnarray}}
\newtheorem{mydef}{Definition}
\newtheorem{lemma}{Lemma} 
\newtheorem{remark}{Remark}
\def\mr@ignsp#1 {\ifx\:#1\@empty\else #1\expandafter\mr@ignsp\fi}%
\newcommand{\multiref}[1]{\begingroup
\xdef\mr@no@sparg{\expandafter\mr@ignsp#1 \: }%
\def\mr@comma{}%
\@for\mr@refs:=\mr@no@sparg\do{\mr@comma\def\mr@comma{,}\ref{\mr@refs}}%
\endgroup}
\newcommand{\hypref}[2]{\ifx\href\asklfhas #2\else\href{#1}{#2}\fi}
\newcommand{\Secref}[1]{Section~\multiref{#1}}
\renewcommand{\eqref}[1]{(\multiref{#1})}
\newlength{\apb@width}
\newcommand{\autoparbox}[2][c]{\settowidth{\apb@width}{#2}\parbox[#1]{\apb@width}{#2}}
\asklfhas\newcommand{\href}[2]{#2}\fi
\begin{document}

\renewcommand{\thefootnote}{\fnsymbol{footnote}}
\thispagestyle{empty}
\begin{flushright}\footnotesize
ITP-UU-13/25 \\
SPIN-13/18
\end{flushright}
\vspace{1cm}

\begin{center}%
{\Large\bfseries%
\hypersetup{pdftitle={Functional relations and the \\ Yang-Baxter algebra}}%
Functional relations and the \\ Yang-Baxter algebra%
\par} \vspace{2cm}%

\textsc{W. Galleas}\vspace{5mm}%
\hypersetup{pdfauthor={Wellington Galleas}}%

\textit{Institute for Theoretical Physics and Spinoza Institute, \\ Utrecht University, Leuvenlaan 4,
3584 CE Utrecht, \\ The Netherlands}\vspace{3mm}%

\verb+w.galleas@uu.nl+ %

\par\vspace{2.5cm}

\textbf{Abstract}\vspace{7mm}

\begin{minipage}{12.7cm}
Functional equations methods are a fundamental part of the theory of Exactly Solvable Models in
Statistical Mechanics and they are intimately connected with Baxter's concept of commuting transfer matrices.
This concept has culminated in the celebrated Yang-Baxter equation which plays a 
fundamental role for the construction of quantum integrable systems and also for obtaining their exact solution. 
Here I shall discuss a proposal that has been put forward in the past years, in which the Yang-Baxter algebra
is viewed as a source of functional equations describing quantities of physical interest. For instance, this 
method has been successfully applied for the description of the spectrum of open spin chains, partition functions
of elliptic models with domain wall boundaries and scalar product of Bethe vectors. Further applications of 
this method are also discussed.

\hypersetup{pdfkeywords={Yang-Baxter equation, functional relations, partial differential equations}}%
\hypersetup{pdfsubject={}}%
\end{minipage}
\vskip 2cm
{\small December 2013}

\end{center}

\newpage
\pagestyle{empty}
\renewcommand{\thefootnote}{\arabic{footnote}}
\setcounter{footnote}{0}


\section{Introduction}
\label{sec:intro}

Exact solutions have played an important role for the development of physical theories
and assumptions and their contributions can be seen in a variety of contexts. For instance,
Onsager's solution of the two-dimensional Ising model \cite{Onsager_1944} not only
showed that the formalism of Statistical Mechanics was indeed able to describe phase 
transitions, but also unveiled that the critical behavior of the Ising model specific
heat was not included in Landau's theory of critical exponents \cite{Ma}. 
In a different context, the exact solution of the one-dimensional Heisenberg spin chain 
\cite{Bethe_1931} was also of fundamental importance for elucidating the value of the spin
carried by a spin wave \cite{Faddeev_1981}.

Bethe's celebrated solution of the isotropic Heisenberg chain \cite{Bethe_1931} consists of 
a fundamental stone of the modern theory of quantum integrable systems and its influence
can be seen in several areas ranging from Quantum Field Theory \cite{Zamolodchikov_1990, Zarembo_2003}
to Combinatorics \cite{Galleas_Brak}. The hypothesis employed by Bethe for the model wave function is
known nowadays as `Bethe ansatz' and it became a standard tool in the theory of quantum integrable systems.
On the other hand, Onsager's solution of the two-dimensional Ising model was based on Kramers and
Wannier transfer matrix technique \cite{Kramers_1941a , Kramers_1941b} which did not have any 
previous connection with Bethe ansatz. Nevertheless, the transfer matrix technique was later on recognized
as a fundamental ingredient for the establishment of integrability in the sense of Baxter \cite{Baxter_1971}.

Within this scenario Baxter's concept of integrability appeared as an analogous of Liouville's
classical concept where now the transfer matrix was playing the role of generating function
of quantities in involution \cite{Baxter_1971}. More precisely, in Baxter's framework a family of
mutually commuting operators is obtained as a consequence of transfer matrices which commute for different values
of their parameters. In their turn, these commutative transfer matrices are built directly
from solutions of the Yang-Baxter equation. 

The importance of the Yang-Baxter equation was only better understood with the 
proposal of the Quantum Inverse Scattering Method \cite{Sk_1979, Fad_1979}. This method unified
the transfer matrix approach, the Yang-Baxter equation and the Bethe ansatz employed to solve
a variety of one-dimensional quantum many-body systems. Besides that, the Quantum Inverse
Scattering Method, or QISM for short, put in evidence the so called Yang-Baxter algebra which
latter on led to the notion of Quantum Groups \cite{chari1995guide}.

The Yang-Baxter algebra plays a fundamental role within the QISM and it is one of the main
ingredients for the construction of exact eigenvectors of transfer matrices of two-dimensional lattice
systems and hamiltonians of one-dimensional quantum many-body systems.
However, the applications of the Yang-Baxter algebra are not limited to that and alternative ways of exploring the Yang-Baxter
algebra are also known in the literature. For instance, it can be used to build solutions of the Knizhnik-Zamolodchikov 
equation \cite{KZ_1984} in the sense of \cite{Babujian_1993, Babujian_1994}.

More recently, the Yang-Baxter algebra was also shown to be capable of rendering functional equations
describing quantities such as the spectrum of spin chains and partition functions of vertex
models \cite{Galleas_2008, Galleas10, Galleas11, Galleas_2011, Galleas_2012, Galleas_SC}. 
Here we aim to discuss this latter possibility. 

This article is organized as follows. In \Secref{sec:LAT} we introduce definitions which will be relevant 
throughout this paper and also present the lattice systems we shall consider by means of this algebraic-functional 
approach. In \Secref{sec:YBAFZ} we illustrate how the Yang-Baxter algebra can be converted into 
functional equations and, in particular, we derive functional relations describing the partition function
of the elliptic Eight-Vertex-SOS model with domain wall boundaries and scalar products of Bethe vectors.
The solutions of the aforementioned functional equations are also presented in \Secref{sec:YBAFZ},
and in \Secref{sec:PDE} we unveil a family of partial differential equations underlying 
our functional relations. Concluding remarks are then discussed in \Secref{sec:conclusion}.

\section{Yang-Baxter relations and lattice systems}
\label{sec:LAT}

Lattice systems of Statistical Mechanics have a long history and remarkable examples share
the property of being exactly solvable \cite{Baxter_book}. The most prominent examples, such
as the Ising model and Eight-Vertex model, have been solved in two-dimensions and this choice of
dimensionality undoubtedly grants them special properties. For instance, in \cite{Smirnov_2001, Smirnov_pre} 
Smirnov proved that the scaling limit of the critical site percolation on a two-dimensional triangular lattice
is conformally invariant. The importance of this proof can be seen in two ways: from the mathematical perspective Smirnov's
proof introduced the concept of `discrete' harmonic functions. On the other hand, this proof provides
a solid ground for the CFT (Conformal Field Theory) methods employed by Cardy in \cite{Cardy_1992}.

The concept of exact solvability seems to be intrinsically dependent on the method we are employing. 
However, it is nowadays well accepted that Baxter's commuting transfer matrices approach \cite{Baxter_1971}
plays a fundamental role for two-dimensional lattice systems within a variety of methods. For instance, the
requirement of commuting transfer matrices leads us to the Yang-Baxter equation/algebra \cite{Sk_1979, Fad_1979} 
and also their dynamical counterparts \cite{Felder_1994, Felder_1996}. Those algebraic relations constitute
the foundations of the algebraic Bethe ansatz \cite{Fad_1979} and, as firstly demonstrated in \cite{Galleas_2008}, 
they are also able to describe spectral problems in terms of functional equations.

Vertex and Solid-on-Solid models are some important examples of exactly solvable lattice systems
and both of them admit an operatorial description in terms of generators of the Yang-Baxter algebra and its 
dynamical version. Here we will be mainly interested in the so called Eight-Vertex-SOS and Six-Vertex models, and for that
it is enough to present only the dynamical Yang-Baxter equation/algebra while the standard Yang-Baxter relations
will be obtained as a particular limit.

\paragraph{Dynamical Yang-Baxter equation.} Let $\alg{g}$ be a finite dimensional Lie algebra over 
$\mathbb{C}$ and $\alg{h} \subset \alg{g}$ be an abelian Lie subalgebra. Also, let 
$\mathbb{V} = \bigoplus_{\phi \in \alg{h}^{*}} \mathbb{V}[\phi]$ with
$\mathbb{V}[\phi] = \{ v \in \mathbb{V} \; | \;  h v = \phi(h) v \; \mbox{for} \; h \in \alg{h} \}$ 
be a diagonalizable $\alg{h}$-module. Then for $\lambda_j , \gamma, \theta \in \mathbb{C}$ the dynamical
Yang-Baxter equation reads
\<
\label{dyn}
\mathcal{R}_{12}(\lambda_1 - \lambda_2 , \theta - \gamma h_3) \mathcal{R}_{13}(\lambda_1 - \lambda_3 , \theta) \mathcal{R}_{23}(\lambda_2 - \lambda_3 , \theta - \gamma h_1) = \nonumber \\
\mathcal{R}_{23}(\lambda_2 - \lambda_3 , \theta)   \mathcal{R}_{13}(\lambda_1 - \lambda_3 , \theta - \gamma h_2)  \mathcal{R}_{12}(\lambda_1 - \lambda_2 , \theta) \; .
\>
Eq. (\ref{dyn}) is a relation for an operator $\mathcal{R}_{ij} \; : \; \mathbb{C} \times \alg{h}^{*} \mapsto \mbox{End}( \mathbb{V}_i \otimes \mathbb{V}_j )$ 
where $\mathbb{V}_i$ ($i=1,2,3$) are finite dimensional diagonalizable $\alg{h}$-modules.  
In this way we have (\ref{dyn}) defined in $\mbox{End}( \mathbb{V}_1 \otimes \mathbb{V}_2 \otimes \mathbb{V}_3 )$
with tensor products being understood as
\[
\label{tensor}
\mathcal{R}_{12}(\lambda , \theta - \gamma h_3) ( v_1 \otimes v_2 \otimes v_3 ) = (\mathcal{R}_{12}(\lambda , \theta - \gamma \phi) ( v_1 \otimes v_2  )) \otimes v_3 \; .
\]
The term $\phi$ in (\ref{tensor}) corresponds to the weight of $v_3$ while the remaining elements $\mathcal{R}_{13}$ and $\mathcal{R}_{23}$
are computed by analogy. As far as the solutions of (\ref{dyn}) are concerned, it is currently well understood the importance of the elliptic
quantum groups $E_{p, q} [ \alg{g} ]$ for their characterization. Here we shall restrict ourselves to the case $\alg{g} \simeq \alg{sl}(2)$,
and in that case we consider $\alg{h}$ as the $\alg{sl}(2)$ Cartan subalgebra while $\mathbb{V} \cong \mathbb{C}^2$.
Thus $h = \mbox{diag}(1, -1)$ and we have the explicit solution
\begin{align} \label{rmat}
\mathcal{R} (\lambda, \theta) &= \left( \begin{matrix}
a_{+}(\lambda, \theta) & 0 & 0 & 0 \cr 
0 & b_{+}(\lambda, \theta) & c_{+}(\lambda, \theta) & 0 \cr
0 & c_{-}(\lambda, \theta) & b_{-}(\lambda, \theta) & 0 \cr
0 & 0 & 0 & a_{-}(\lambda, \theta) \end{matrix} \right) & 
& \begin{matrix} \quad \quad a_{\pm}(\lambda, \theta) = f(\lambda + \gamma) \\ 
\quad \quad \;\; b_{\pm}(\lambda, \theta) = f(\lambda) \frac{f(\theta \mp \gamma)}{f(\theta)} \\
\quad \quad \;\; c_{\pm}(\lambda, \theta) = f(\gamma) \frac{f(\theta \mp \lambda)}{f(\theta)} \end{matrix} \; . \nonumber \\
\end{align}
The function $f$ in (\ref{rmat}) is essentially a Jacobi Theta-function. More precisely we have
$f(\lambda) = \Theta_1 (\ii \lambda, \tau)/2$, according to the conventions of \cite{Watson}, 
and the dependence of $f$ with the elliptic nome $\tau$ is omitted for convenience.  

\paragraph{Dynamical Yang-Baxter algebra.} Let $\mathbb{V}_{a} \cong \mathbb{V}$,
$\mathbb{V}_{\mathcal{Q}} \cong \mathbb{V}^{\otimes L}$ and consider an operator
$\mathcal{T}_{a} \in \mbox{End}( \mathbb{V}_{a} \otimes \mathbb{V}_{\mathcal{Q}})$
for an arbitrary integer $L$. Then the dynamical Yang-Baxter equation (\ref{dyn}) ensures the associativity of the 
relation
\<
\label{RTT}
\mathcal{R}_{ab}(\lambda_1 - \lambda_2, \theta - \gamma \gen{H}) \mathcal{T}_a (\lambda_1, \theta) \mathcal{T}_b (\lambda_2, \theta - \gamma \hat{h}_a) = \nonumber \\ 
\mathcal{T}_b (\lambda_2, \theta) \mathcal{T}_a (\lambda_1, \theta - \gamma \hat{h}_b ) \mathcal{R}_{ab}(\lambda_1 - \lambda_2, \theta) \; , 
\>
for $\mathcal{R}$-matrices satisfying the weight zero condition, i.e. $[ \mathcal{R} , h\otimes 1 + 1 \otimes h ] = 0$.
As usual we define $h_i \in \mathbb{V}_{\mathcal{Q}}$ as $h$ acting on the $i$-th node of the tensor product space
$\mathbb{V}_{\mathcal{Q}}$ while $\gen{H} = \sum_{i=1}^L h_i$. The generator $\gen{H}$ can be identified with the 
$\alg{sl}(2)$ Cartan generator acting on $\mathbb{V}_{\mathcal{Q}}$ and one can also show that
\[
\label{rep}
\mathcal{T}_{a} (\lambda, \theta) = \mathop{\overrightarrow\prod}\limits_{1 \le i \le L } \mathcal{R}_{a i}(\lambda - \mu_i, \hat{\theta}_i) 
\]
is a representation of (\ref{RTT}) with $\hat{\theta}_i = \theta - \gamma \sum_{k=i+1}^L h_{k}$ and arbitrary parameters $\mu_i \in \mathbb{C}$.
The operator $\mathcal{T}$ is usually denominated dynamical monodromy matrix, or
simply monodromy matrix, and it consists of a matrix in the space $\mathbb{V}_{a}$ whose entries are operators
in the space $\mathbb{V}_{\mathcal{Q}}$. Thus for the $E_{p, q} [ \alg{sl}(2) ]$ solution (\ref{rmat}),
our monodromy matrix can be recasted as
\[
\label{ABCD}
\mathcal{T} (\lambda , \theta) = \left( \begin{matrix}
\mathcal{A}(\lambda , \theta) & \mathcal{B}(\lambda , \theta) \\
\mathcal{C}(\lambda , \theta) & \mathcal{D}(\lambda , \theta) \end{matrix} \right)
\]
where $\mathcal{A}, \mathcal{B}, \mathcal{C}, \mathcal{D} \in \mbox{End}( (\mathbb{C}^2 )^{\otimes L})$. 

\begin{remark} \label{limit} An important limit of (\ref{rmat}) is the limit $p = e^{\ii \pi \tau} \rightarrow 0$
where the elliptic Theta-functions degenerate into trigonometric ones. Furthermore, if we also 
consider the limit $\theta \rightarrow \infty$ we are left with the standard  $\mathcal{R}$-matrix
and Yang-Baxter relations of the six-vertex model \cite{Baxter_book}.
\end{remark}

\paragraph{Integrable lattice systems.} Exact solvability of two-dimensional lattice systems can be achieved
from certain conditions of integrability in analogy to the theory of integrable differential 
equations. This condition is fulfilled by commuting transfer matrices which is assured by local
equivalence transformations satisfied by the model statistical weights \cite{Baxter_book}.
For vertex models we can encode the statistical weight of a given vertex configuration as the entry of a
certain matrix $\mathcal{R}$. In this way the aforementioned equivalence transformation requires this $\mathcal{R}$-matrix
to satisfy the celebrated Yang-Baxter equation. For Solid-on-Solid models, or SOS for short, the condition of integrability 
in the sense of Baxter requires the model statistical weights to satisfy the so called Hexagon identity \cite{integrable_book}.
This condition was shown in \cite{Felder_1994} to be directly related to the dynamical Yang-Baxter equation (\ref{dyn}), 
and in what follows we shall briefly describe some important examples of integrable SOS and vertex lattice systems.

\begin{enumerate}
\item {\bf Eight-Vertex-SOS model with domain wall boundaries:} This model consists of a two-dimensional lattice system defined
on a square lattice. It is built from the juxtaposition of plaquettes where we associate a set of state variables
to the corners of each plaquette in order to characterize its allowed configurations. As far as boundary conditions are concerned,
the case of domain wall boundaries consists of the assumption that the plaquettes at the border are fixed at a particular
configuration. This model has been previously considered in \cite{Rosengren_2008, Pakuliak_2008, WengLi_2009, Razumov_2009a}
and here we shall adopt the conventions of \cite{Galleas_2011, Galleas_2012}. In this way the partition function
of the Eight-Vertex-SOS model with domain wall boundaries can be written as
\[
\label{pf}
\mathcal{Z}_{\theta} = \bra{\bar{0}} \mathop{\overrightarrow\prod}\limits_{1 \le j \le L } \mathcal{B}(\lambda_j, \theta + j \gamma) \ket{0} \; ,
\]
where the operators $\mathcal{B}(\lambda , \theta)$ are defined through the relations (\ref{ABCD}), (\ref{rmat}) and (\ref{rep}).
In their turn the vectors $\ket{0}$ and $\ket{\bar{0}}$ are explicitly given by
\<
\label{states}
\ket{0} = \left( \begin{matrix}
1 \\ 0 \end{matrix} \right)^{\otimes L} \qquad \qquad 
\ket{\bar{0}} = \left( \begin{matrix}
0 \\ 1 \end{matrix} \right)^{\otimes L} \; .
\>

\item {\bf Scalar product of Bethe vectors:} The evaluation of partition functions of lattice systems such as vertex models
with periodic boundary conditions can be conveniently translated into an eigenvalue problem for an operator usually denominated
transfer matrix \cite{Kramers_1941a, Kramers_1941b}. The diagonalization of transfer matrices for integrable vertex models 
can be performed, for instance, through the Quantum Inverse Scattering Method \cite{Fad_1979}. Within that method, Bethe vectors
arise as an ansatz capable of determining transfer matrices exact eigenvectors. In a slightly different context, scalar products of Bethe vectors
can also be regarded as the partition function of a vertex model with special boundary conditions \cite{Korepin82, deGier_Galleas11},
and this is the interpretation we shall pursue here. 
As it was stressed out in Remark \ref{limit}, the standard six-vertex model relations can be obtained from (\ref{rmat})
in a particular limit. In that limit we have $\mathcal{A}(\lambda , \theta) \rightarrow A(\lambda)$, $\mathcal{B}(\lambda , \theta) \rightarrow B(\lambda)$, 
$\mathcal{C}(\lambda , \theta) \rightarrow C(\lambda)$, $\mathcal{D}(\lambda , \theta) \rightarrow D(\lambda)$ and the scalar product
of Bethe vectors $S_n$ then reads
\[
\label{sn}
S_n = \bra{0} \mathop{\overleftarrow\prod}\limits_{1 \le i \le n } C(\lambda_i^{C}) \; \mathop{\overrightarrow\prod}\limits_{1 \le i \le n } B(\lambda_i^{B})  \ket{0} \; .
\]
In formulae (\ref{sn}) we have considered the description employed in \cite{deGier_Galleas11} while the vector $\ket{0}$ has been defined 
in (\ref{states}). 
\end{enumerate}

From (\ref{pf}) and (\ref{sn}) we can see that the above defined partition functions are given as the expected value of a product
of generators of the Yang-Baxter algebra and its dynamical counterpart. In what follows we shall demonstrate how 
algebraic relations (\ref{RTT}) can be explored in order to obtain functional equations determining the aforementioned quantities.

\section{Yang-Baxter algebra and functional relations}
\label{sec:YBAFZ}

In order to illustrate how the Yang-Baxter algebra can be employed to derive functional relations, 
we shall consider the six-vertex model limit of (\ref{dyn}), (\ref{rmat}) and (\ref{RTT}) for the sake
of simplicity. In particular, the relation (\ref{RTT}) then encodes a total of sixteen commutation rules
involving the set of generators $\mathcal{M}(\lambda) = \{A , B , C , D \}(\lambda)$ evaluated at different values of the 
spectral parameter $\lambda$. 

From the perspective of Quantum Field Theory one can regard the operators $A(\lambda)$ and 
$D(\lambda)$ as diagonal fields while $B(\lambda)$ and $C(\lambda)$ plays the role of creation and annihilation fields.
As far as we are concerned with an eigenvalue problem involving the set of generators $\mathcal{M} (\lambda)$, i.e. the diagonalization
of a transfer matrix \cite{Korepin_book, Fad_1979}, the framework of Quantum Field Theory is quite appealing and it seems natural
to build the corresponding eigenvectors as elements of a Fock space. In particular, this approach is encouraged by the structure of the commutation
relations in  (\ref{RTT}). Those commutation rules constitute one of the corner stones of the algebraic Bethe ansatz \cite{Fad_1979} and 
for instance let us single out the following one
\[
\label{AB}
A(\lambda_1) B(\lambda_2) = \frac{a(\lambda_2 - \lambda_1)}{b(\lambda_2 - \lambda_1)} B(\lambda_2) A(\lambda_1) - \frac{c(\lambda_2 - \lambda_1)}{b(\lambda_2 - \lambda_1)}  B(\lambda_1) A(\lambda_2) \; ,
\]
where $a(\lambda) = \sinh{(\lambda + \gamma)}$, $b(\lambda) = \sinh{(\lambda)}$ and $c(\lambda) = \sinh{(\gamma)}$. 
Within the framework of the algebraic Bethe ansatz we usually regard (\ref{AB}) as a relation between a diagonal field
and a creation field.  Nevertheless, this is not the only way one can explore relations of type
(\ref{AB}), and in what follows we shall see they can also be projected as a functional relation.

\begin{mydef} \label{Pi2}
Let $\gen{\pi} : \mbox{End}( \mathbb{V}_{\mathcal{Q}} ) \times \mbox{End}( \mathbb{V}_{\mathcal{Q}} ) \mapsto \mathbb{C}$ be a continuous
and bi-additive map. Due to (\ref{AB}), or more generally (\ref{RTT}), it is convenient to specialize the map $\gen{\pi}$ to $\gen{\pi}_2$ 
defined as
\[
\label{pi2}
\gen{\pi}_2 : \qquad \mathcal{M}(\lambda) \times \mathcal{M}(\mu) \mapsto \mathbb{C}[\lambda^{\pm 1}, \mu^{\pm 1}] \; .
\] 
The $2$-tuple $(\xi_1 , \xi_2) : \xi_1 \in \mathcal{M}(\lambda) , \; \xi_2 \in \mathcal{M}(\mu)$ 
originated from the Cartesian product $\mathcal{M}(\lambda) \times \mathcal{M}(\mu)$ is then simply
understood as the matrix product $\xi_1 \xi_2$. In other words the map $\gen{\pi}_2$ associates a two-variable
complex function to any quadratic form in (\ref{RTT}). 
\end{mydef}

The map $\gen{\pi}_2$ defined in (\ref{pi2}) is able to associate a functional relation to any commutation
rule contained in (\ref{RTT}). For instance, the map (\ref{pi2}) applied on (\ref{AB}) yields the relation
\[
\label{fAB}
b(\lambda_2 - \lambda_1) f(\lambda_1 , \lambda_2) = a(\lambda_2 - \lambda_1) \bar{f} (\lambda_2 , \lambda_1) - c(\lambda_2 - \lambda_1) \bar{f} (\lambda_1 , \lambda_2) 
\]
where $f(\lambda_1 , \lambda_2) = \gen{\pi}_2 ( A(\lambda_1) B(\lambda_2) )$ and  $\bar{f}(\lambda_1 , \lambda_2) = \gen{\pi}_2 ( B(\lambda_1) A(\lambda_2) )$. 

The study of functional equations has a long history, see for instance the monograph \cite{Aczel_book}, and
they play a remarkable role in Statistical Mechanics \cite{Baxter_book} and Conformal Field Theory \cite{Bazhanov_1997}.
Within those contexts they appear intimately related to Baxter's concept of commuting transfer matrices \cite{Baxter_1971} and
among prominent examples we have Baxter's $T-Q$ relation \cite{Baxter_1972}, inversion relation \cite{Stroganov_1982}, analytical
Bethe ansatz \cite{Reshet_1987} and $Y$-system \cite{Kuniba_1994, Kuniba_2011}. Also, it is worth mentioning the quantum 
Knizhnik-Zamolodchikov equation \cite{Frenkel_Reshetikhin} which describes form factors and correlation functions in integrable
field theories \cite{Smirnov_book}. Here we intend to demonstrate that partition functions of integrable lattice models,
such as (\ref{pf}) and (\ref{sn}), can also be described by functional equations. Interestingly, the functional equations
describing those partition functions follow directly from the Yang-Baxter algebra within the lines above discussed.
In order to show that we first need to generalize the Definition \ref{Pi2} in the following way.

\begin{mydef}
Let $n$ be an integer. Then we define the $n$-additive continuous map $\gen{\pi}_n$ as
\[
\label{pin}
\gen{\pi}_n : \qquad \mathcal{M}(\lambda_1) \times \mathcal{M}(\lambda_2) \times \dots \times \mathcal{M}(\lambda_n) \mapsto \mathbb{C}[\lambda_1^{\pm 1} , \lambda_2^{\pm 1} , \dots , \lambda_n^{\pm 1} ] \; .
\]
Similarly to (\ref{pi2}) the $n$-tuple $(\xi_1 , \xi_2 , \dots , \xi_n) : \xi_i \in \mathcal{M}(\lambda_i)$ is understood
as the non-commutative product $\mathop{\overrightarrow\prod}\limits_{1 \le i \le n } \xi_i$. In other words,
the map $\gen{\pi}_n$ associates a $n$-variable complex function to a product of $n$ generators of the Yang-Baxter algebra.
\end{mydef}

\paragraph{Realization of $\gen{\pi}_n$.} A simple choice of realization of $\gen{\pi}_n$ is
the scalar product with vectors $\ket{\psi} , \ket{\psi'} \in \mathbb{V}_{\mathcal{Q}}$.
More precisely, we can readily see that
\[
\label{pinR}
\gen{\pi}_n ( \mathcal{F} ) = \bra{\psi'} \mathcal{F} \ket{\psi} 
\] 
is a realization of $\gen{\pi}_n$ for any element $\mathcal{F} \in \mathcal{M}(\lambda_1) \times \mathcal{M}(\lambda_2) \times \dots \times \mathcal{M}(\lambda_n)$.
At this stage $\ket{\psi}$ and $\ket{\psi'}$ are arbitrary vectors but it will become clear that particular choices can render
interesting functional equations describing quantities such as (\ref{pf}) and (\ref{sn}).

\subsection{Functional equation for $\mathcal{Z}_{\theta}$}
\label{sec:FEZ}

In this section we aim to show how a functional equation for the partition function $\mathcal{Z}_{\theta}$
can be derived within the lines discussed in \Secref{sec:YBAFZ}. For that the first step is to find
appropriate vectors $\ket{\psi}$ and $\ket{\psi'}$ in order to employ the realization (\ref{pinR}).
In addition to that it is also important to consider suitable elements $\mathcal{F}$ such that
we end up with an equation capable of determining the desired partition function. 
Although there is no precise recipe for selecting $\ket{\psi}$, $\ket{\psi'}$ and $\mathcal{F}$,
we shall see that some properties of the elements entering in the definitions (\ref{pf}) and (\ref{states}) 
can help us to sort that out.

\paragraph{Highest weight vectors.} The vectors $\ket{0}$ and $\ket{\bar{0}}$ defined in (\ref{states})
are $\alg{sl}(2)$ highest and lowest weight vectors. They satisfy the following properties: 

\begin{align}
\label{Raction}
\mathcal{A}(\lambda , \theta) \ket{\bar{0}} & = \frac{f(\theta - \gamma)}{f(\theta + (L-1)\gamma)} \prod_{j=1}^{L} f(\lambda - \mu_j) \ket{\bar{0}} & \mathcal{A}(\lambda , \theta) \ket{0} & = \prod_{j=1}^{L} f(\lambda - \mu_j + \gamma) \ket{0} \nonumber \\
\mathcal{D}(\lambda , \theta) \ket{0} & = \frac{f(\theta + \gamma)}{f(\theta - (L-1)\gamma)} \prod_{j=1}^{L} f(\lambda - \mu_j) \ket{0} & \mathcal{D}(\lambda , \theta) \ket{\bar{0}} & = \prod_{j=1}^{L} f(\lambda - \mu_j + \gamma) \ket{\bar{0}} \nonumber \\
\mathcal{C}(\lambda , \theta) \ket{0} & = 0 & B(\lambda) \ket{\bar{0}} & = 0 
\end{align}
\begin{align}
\label{Laction}
\bra{\bar{0}} \mathcal{A}(\lambda , \theta)  & = \frac{f(\theta - \gamma)}{f(\theta + (L-1)\gamma)} \prod_{j=1}^{L} f(\lambda - \mu_j) \bra{\bar{0}} & \bra{0} \mathcal{A}(\lambda , \theta)  & = \prod_{j=1}^{L} f(\lambda - \mu_j + \gamma) \bra{0} \nonumber \\
\bra{0} \mathcal{D}(\lambda , \theta)  & = \frac{f(\theta + \gamma)}{f(\theta - (L-1)\gamma)} \prod_{j=1}^{L} f(\lambda - \mu_j) \bra{0} & \bra{\bar{0}} \mathcal{D}(\lambda , \theta)  & = \prod_{j=1}^{L} f(\lambda - \mu_j + \gamma) \bra{\bar{0}} \nonumber \\
\bra{\bar{0}} \mathcal{C}(\lambda , \theta)  & = 0 & \bra{0} B(\lambda)  & = 0 
\end{align}
The expressions (\ref{Raction}) and (\ref{Laction}) follow from definitions (\ref{rep}), (\ref{rmat})
and the highest/lowest weight property of $\ket{0}$ and $\ket{\bar{0}}$.
\medskip

\paragraph{Yang-Baxter relations of degree $n$.} The relations arising from the 
dynamical Yang-Baxter algebra (\ref{RTT}) involve the set of generators 
$\mathcal{M}(\lambda , \theta) = \{ \mathcal{A}, \mathcal{B} , \mathcal{C} , \mathcal{D} \}(\lambda , \theta)$
and $\gen{H}$. Although some entries of (\ref{RTT}) contain products of the form 
$\mathcal{M}(\lambda_1 , \theta_1) \times \mathcal{M}(\lambda_2 , \theta_2) \times f(\gen{H})$, there are still commutation
rules with all terms in  $\mathcal{M}(\lambda_1 , \theta_1) \times \mathcal{M}(\lambda_2 , \theta_2)$. Those are the relations
that will be explored here and among them we have the following ones
\<
\label{BB}
\mathcal{B}(\lambda_1, \theta) \mathcal{B}(\lambda_2, \theta +\gamma) &=& \mathcal{B}(\lambda_2, \theta) \mathcal{B}(\lambda_1, \theta +\gamma) \nonumber \\
\mathcal{A}(\lambda_1, \theta + \gamma) \mathcal{B}(\lambda_2, \theta) &=& \frac{f(\lambda_2 - \lambda_1 +\gamma)}{f(\lambda_2 - \lambda_1)} \frac{f(\theta +\gamma)}{f(\theta + 2 \gamma)} \mathcal{B}(\lambda_2, \theta + \gamma) \mathcal{A}(\lambda_1, \theta + 2\gamma) \nonumber \\
&-& \frac{f(\theta + \gamma - \lambda_2 + \lambda_1)}{f(\lambda_2 - \lambda_1)} \frac{f(\gamma)}{f(\theta + 2\gamma)} \mathcal{B}(\lambda_1, \theta + \gamma) \mathcal{A}(\lambda_2, \theta + 2\gamma)  \; . \nonumber \\
\>
Both expressions in (\ref{BB}) are quadratic and their repeated use is able to provide relations of degree $n$ for
a subset of elements $\mathcal{F} \subset \mathcal{W}_n = \mathcal{M}(\lambda_0, \theta_0) \times \mathcal{M}(\lambda_1 , \theta_1) \times \dots \times \mathcal{M}(\lambda_{n-1} , \theta_{n-1})$.
More precisely, the iteration of (\ref{BB}) yields the following relation of degree $n+1$, 
\<
\label{ABn}
&& \mathcal{A}(\lambda_0, \theta + \gamma) Y_{\theta - \gamma} (\lambda_1 , \dots , \lambda_n) = \nonumber \\
&& \frac{f(\theta + \gamma)}{f(\theta + (n+1) \gamma)} \prod_{j=1}^{n} \frac{f(\lambda_j - \lambda_0 + \gamma)}{f(\lambda_j - \lambda_0)} Y_{\theta} (\lambda_1 , \dots , \lambda_n) \mathcal{A}(\lambda_0 , \theta + (n+1)\gamma) \nonumber \\
&& - \sum_{i=1}^{n} \frac{f(\theta + \gamma - \lambda_i + \lambda_0)}{f(\theta + (n+1)\gamma)} \frac{f(\gamma)}{f(\lambda_i - \lambda_0)} \prod_{\stackrel{j=1}{j \neq i}}^{n} \frac{f(\lambda_j - \lambda_i + \gamma)}{f(\lambda_j - \lambda_i)} \times \nonumber \\
&& \qquad \quad Y_{\theta} (\lambda_0 , \lambda_1 , \dots , \lambda_{i-1} , \lambda_{i+1} , \dots , \lambda_n) \mathcal{A}(\lambda_i , \theta + (n+1)\gamma) \; ,
\>
where $Y_{\theta} (\lambda_1 , \dots , \lambda_n) = \mathop{\overrightarrow\prod}\limits_{1 \le j \le n } B(\lambda_j, \theta + j\gamma)$.

\paragraph{Building $\gen{\pi}_n$ for $\mathcal{Z}_{\theta}$.} The relations (\ref{Raction}) and (\ref{Laction}) suggest the 
prescription $\ket{\psi} = \ket{0}$ and $\ket{\psi'} = \ket{\bar{0}}$. As we shall see this particular
choice of $\gen{\pi}_n$ is able to generate a functional equation for the partition function $\mathcal{Z}_{\theta}$ from the algebraic 
relation (\ref{ABn}).

\medskip 
Taking into account the discussion of \Secref{sec:YBAFZ} we then set $n = L$ in the relation (\ref{ABn}).  
Next we consider the action of the map $\gen{\pi}_{L+1}$ on (\ref{ABn}) and by doing so we find only terms of the form
\[
\label{pr}
\gen{\pi}_{L+1} ( \mathcal{A}(\lambda_0, \theta + \gamma) Y_{\theta - \gamma} (\lambda_1 , \dots , \lambda_L) )
\]
and
\[
\label{pr1}
\gen{\pi}_{L+1} ( Y_{\theta} (\lambda_0 , \lambda_1 , \dots , \lambda_{i-1} , \lambda_{i+1} , \dots , \lambda_L) \mathcal{A}(\lambda_i , \theta + (L+1)\gamma)  ) \;\; .  
\]
Interestingly, the relations (\ref{Raction}) and (\ref{Laction}) obtained as a consequence of the highest/lowest
weight property of $\ket{0}$ and $\ket{\bar{0}}$ give rise to a map $\gen{\pi}_{L+1} \mapsto \gen{\pi}_{L}$.
More precisely we have
\<
\label{PLL}
\gen{\pi}_{L+1}( \mathcal{A}(\lambda_0, \theta + \gamma) Y_{\theta - \gamma} (\lambda_1 , \dots , \lambda_L) ) &=& \frac{f(\theta)}{f(\theta + L \gamma)} \prod_{j=1}^{L} f(\lambda_0 - \mu_j) \;\; \gen{\pi}_{L}( Y_{\theta - \gamma} (\lambda_1 , \dots , \lambda_L) ) \nonumber \\
\>
and
\<
\label{PLL1}
\gen{\pi}_{L+1}( Y_{\theta} (\lambda_0 , \lambda_1 , \dots , \lambda_{i-1} , \lambda_{i+1} , \dots , \lambda_L) \mathcal{A}(\lambda_i , \theta + (L+1)\gamma)  ) = \nonumber \\
\prod_{j=1}^{L} f(\lambda_i - \mu_j) \;\; \gen{\pi}_{L}( Y_{\theta} (\lambda_0 , \lambda_1 , \dots , \lambda_{i-1} , \lambda_{i+1} , \dots , \lambda_L) ) \; .
\>

The partition function (\ref{pf}) can now be promptly identified with $\gen{\pi}_{L}( Y_{\theta} (\lambda_1 , \dots , \lambda_L ) )$. 
In this way the action of $\gen{\pi}_{L+1}$ on (\ref{ABn}), in addition to the relations (\ref{PLL}) and (\ref{PLL1}), leaves us 
with the following functional equation for $\mathcal{Z}_{\theta}$,
\<
\label{FZ}
M_0 \; \mathcal{Z}_{\theta - \gamma}(\lambda_1, \dots , \lambda_L) + \sum_{i=0}^{L} N_i \; \mathcal{Z}_{\theta}(\lambda_0, \dots , \lambda_{i-1}, \lambda_{i+1}, \dots, \lambda_L) = 0  \; .  
\>
The structure of the coefficients $M_0$ and $N_i$ is a direct consequence of the dynamical Yang-Baxter algebra relations (\ref{RTT})
and the highest weight properties (\ref{Raction}) and (\ref{Laction}). For convenience we shall postpone presenting their explicit
form. It is also worth remarking that (\ref{FZ}) made its first appearance in \cite{Galleas_2012} and due to the fact that the operators $\mathcal{B}$ satisfy 
the relation $\mathcal{B}(\lambda_1, \theta) \mathcal{B}(\lambda_2, \theta +\gamma) = \mathcal{B}(\lambda_2, \theta) \mathcal{B}(\lambda_1, \theta +\gamma)$,
as given by (\ref{BB}), the ordering of the arguments of $\mathcal{Z}_{\theta}$ in (\ref{FZ}) is indeed arbitrary. 
This property leads us to the following lemma.

\begin{lemma}[Symmetric function] The partition function $\mathcal{Z}_{\theta}$ is symmetric
under the permutation of its variables, i.e. 
\[
\label{sym}
\mathcal{Z}_{\theta} ( \dots , \lambda_i , \dots , \lambda_j , \dots ) = \mathcal{Z}_{\theta} ( \dots , \lambda_j , \dots , \lambda_i , \dots ) \; .
\]
\end{lemma}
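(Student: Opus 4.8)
The plan is to exploit the commutation relation $\mathcal{B}(\lambda_1,\theta)\mathcal{B}(\lambda_2,\theta+\gamma)=\mathcal{B}(\lambda_2,\theta)\mathcal{B}(\lambda_1,\theta+\gamma)$ from (\ref{BB}), which is the essential ingredient. Recall from (\ref{pf}) that
\[
\mathcal{Z}_{\theta}(\lambda_1,\dots,\lambda_L)=\bra{\bar 0}\mathop{\overrightarrow\prod}\limits_{1\le j\le L}\mathcal{B}(\lambda_j,\theta+j\gamma)\ket{0}
=\gen{\pi}_L\!\left(Y_\theta(\lambda_1,\dots,\lambda_L)\right).
\]
Since any permutation of $\{1,\dots,L\}$ is generated by adjacent transpositions, it suffices to prove invariance under the swap $\lambda_j\leftrightarrow\lambda_{j+1}$ for a single fixed $j$. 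First I would isolate, inside the ordered product, the two consecutive factors $\mathcal{B}(\lambda_j,\theta+j\gamma)\,\mathcal{B}(\lambda_{j+1},\theta+(j+1)\gamma)$; writing $\theta'=\theta+(j-1)\gamma$ these are exactly $\mathcal{B}(\lambda_j,\theta'+\gamma)\,\mathcal{B}(\lambda_{j+1},\theta'+2\gamma)$, which by the first line of (\ref{BB}) (with $\theta\mapsto\theta'+\gamma$) equals $\mathcal{B}(\lambda_{j+1},\theta'+\gamma)\,\mathcal{B}(\lambda_j,\theta'+2\gamma)=\mathcal{B}(\lambda_{j+1},\theta+j\gamma)\,\mathcal{B}(\lambda_j,\theta+(j+1)\gamma)$. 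The point is that the \emph{dynamical shifts $j\gamma$ and $(j+1)\gamma$ stay attached to the slot, not to the spectral parameter}, so the exchange produces precisely the product that defines $\mathcal{Z}_\theta$ with $\lambda_j$ and $\lambda_{j+1}$ interchanged.

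Next I would substitute this local identity back into the full operator string. All factors to the left (involving $\lambda_1,\dots,\lambda_{j-1}$) and to the right (involving $\lambda_{j+2},\dots,\lambda_L$) are untouched, so applying $\bra{\bar 0}\,\cdot\,\ket{0}$—that is, the realization (\ref{pinR}) of $\gen{\pi}_L$—to both sides yields
\[
\mathcal{Z}_\theta(\dots,\lambda_j,\lambda_{j+1},\dots)=\mathcal{Z}_\theta(\dots,\lambda_{j+1},\lambda_j,\dots),
\]
which is (\ref{sym}) for an adjacent transposition. Composing adjacent transpositions gives full symmetry under $S_L$.

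The only genuinely delicate point is bookkeeping of the $\theta$-shifts: one must check that the dynamical parameter appearing in the generic factor $\mathcal{B}(\lambda_k,\theta+k\gamma)$ is governed by the \emph{position} $k$ in the product and not by the index of the spectral parameter it carries, so that after the exchange the string genuinely coincides with $Y_\theta$ evaluated at the permuted arguments. This is immediate from the definition (\ref{pf}), but it is where a naive argument could go astray. No analyticity or further structure of $\mathcal{Z}_\theta$ is needed—the statement is a purely algebraic consequence of the $\mathcal{B}\mathcal{B}$ exchange relation in (\ref{BB}).
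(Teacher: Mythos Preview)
Your proof is correct and follows exactly the approach indicated in the paper, which simply states that the symmetry follows directly from the commutation relation (\ref{BB}); you have spelled out the details (reduction to adjacent transpositions and careful tracking of the $\theta$-shifts) that the paper leaves implicit. The paper also mentions an alternative route via the functional equation (\ref{FZ}) itself, worked out in \cite{Galleas_2012}, but your direct argument from the $\mathcal{B}\mathcal{B}$ exchange relation is the primary one.
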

\begin{proof}
This property follows directly from the commutation relation (\ref{BB}) or from the 
functional equation (\ref{FZ}) as demonstrated in \cite{Galleas_2012}.
\end{proof}
\begin{remark} \label{X}
Due to (\ref{sym}) we can safely employ the notation $\mathcal{Z}_{\theta} (\lambda_1 , \dots , \lambda_L) = \mathcal{Z}_{\theta} ( X^{1,L} ) $
where $X^{i,j} = \{ \lambda_k \; : \; i \le k \le j \}$. 
\end{remark}

Taking into account Remark \ref{X}, it is also convenient to introduce the set $X^{i,j}_k = X^{i,j} \backslash \{ \lambda_k \}$
in such a way that (\ref{FZ}) can be simply recasted as
\[
\label{FX}
M_0 \; \mathcal{Z}_{\theta - \gamma} (X^{1,L}) + \sum_{i=1}^{L} N_i \; \mathcal{Z}_{\theta} ( X^{0,L}_i )  = 0 \; .
\]
In their turn the coefficients $M_0$ and $N_i$ explicitly read
\<
\label{coeff}
M_0 &=& \frac{f(\theta)}{f(\theta + L\gamma)} \prod_{j=1}^{L} f(\lambda_0 - \mu_j) \nonumber \\
N_i &=& - \frac{f(\theta + \gamma + \lambda_0 - \lambda_i)}{f(\theta + (L+1)\gamma)} \frac{f(\gamma)}{f(\lambda_0 - \lambda_i + \gamma)} \prod_{j=1}^{L} f(\lambda_i - \mu_j + \gamma) \prod_{\lambda \in X^{0,L}_i} \frac{f(\lambda - \lambda_i + \gamma)}{f(\lambda - \lambda_i)}  \; . \nonumber \\
\>

Some further remarks are important at this stage. The reader familiar with the theory of Knizhnik-Zamolodchikov (KZ) equations
can notice that (\ref{FX}) exhibits a structure which resembles that of the classical KZ equation \cite{KZ_1984}.
For instance, the first term of (\ref{FX}) consists of the partition function with shifted variable
$\theta$ which could be thought of as an analogous of the derivative.
The second term of (\ref{FX}) consists of a linear combination of partition functions with a given variable $\lambda_i$ in
the argument being replaced by a variable $\lambda_0$. This variable replacement can be regarded as the action
of an operator which can be considered as a sort of `hamiltonian'.
Although KZ equations are vector equations while here we are dealing with a scalar equation, we can see that
both terms of (\ref{FX}) have a counterpart in the KZ theory. Furthermore, we will find that the solution of
(\ref{FX}) also resembles solutions of KZ-like equations \cite{Varchenko_book}. 

\paragraph{Solution.} The functional relation (\ref{FX}) consists of an equation for the partition function  
$\mathcal{Z}_{\theta}(X^{1,L})$ over the set of variables $X^{0,L}$. Thus we have one more variable than is required 
to describe the partition function itself. This feature is typical of functional equations such as the d'Alembert
equation \cite{Aczel_book}, but had not appeared previously in the functional relations describing Exactly Solvable Models
to the best of our knowledge. This extra variable can be set at will in order to help us with the resolution of (\ref{FX}), and this approach
is the basis of the method considered in \cite{Galleas_2011} and \cite{Galleas_2012}.
Moreover, Eq. (\ref{FX}) also exhibits some special properties which give us insightful information 
in order to determine its solution. These properties are as follows:

\begin{itemize}
\item {\bf Scale invariance}: Eq. (\ref{FX}) is invariant under the symmetry transformation 
$\mathcal{Z}_{\theta} ( X^{1,L} ) \mapsto \alpha \mathcal{Z}_{\theta} ( X^{1,L} )$ where $\alpha \in \mathbb{C}$
is independent of $\theta$ and $\lambda_j$. This property tells us that (\ref{FX}) is only able
to determine the partition function up to an overall multiplicative factor independent of
the variables $\theta$ and $\lambda_j$. In this way the full determination of the partition function
will require we know the precise value of $\mathcal{Z}_{\theta}$ for a particular choice of the aforementioned variables.

\item {\bf Linearity}: Eq. (\ref{FX}) is linear and as usual this implies that 
if $\mathcal{Z}_{\theta}^{(1)}$ and $\mathcal{Z}_{\theta}^{(2)}$ are solutions, then the linear combination
$\mathcal{Z}_{\theta} = \mathcal{Z}_{\theta}^{(1)} + \mathcal{Z}_{\theta}^{(2)}$ is also a solution. As far as the determination
of (\ref{pf}) is concerned, this property is telling us we need to establish an uniqueness 
criterium in order to characterize the desired partition function. 
\end{itemize} 

Here we do not intend to give a detailed description of the method employed to solve
(\ref{FX}). Nevertheless, we can still comment on how the properties of scale invariance
and linearity have been worked out. In \cite{Galleas_2012} we have shown that the partition function (\ref{pf})
can be explicitly computed in the limit $(\theta, \lambda_j) \rightarrow \infty$. This result
can then be used to completely fix the overall multiplicative factor which is not constrained
by (\ref{FX}). Concerning the issue raised by the linearity of (\ref{FX}), we have also shown in
\cite{Galleas_2012} that the desired solution consists of a higher order Theta-function \cite{Weber}.
As such, it is uniquely characterized by its zeroes up to an overall factor.

As a matter of fact, unveiling special zeroes of $\mathcal{Z}_{\theta}$ plays an important role
for the resolution of (\ref{FX}) and we have found the following solution in \cite{Galleas_2012},
\<
\label{sol}
&& \mathcal{Z}_{\theta} (X^{1, L}) = [f'(0) f(\gamma)]^L  \nonumber \\
&& \oint \dots \oint \prod_{j=1}^{L} \frac{\dd w_j}{2 \ii \pi} \frac{\prod_{j>i}^{L} f(w_j - w_i + \gamma) f(w_j - w_i)}{\prod_{i,j=1}^{L} f(w_i - \lambda_j)} \prod_{j=1}^{L} \frac{f(\theta + j \gamma - w_{j} + \mu_j)}{f(\theta + j \gamma)} \nonumber \\
&& \qquad \qquad \times \prod_{j<i}^{L}  f(\mu_j - w_i) \prod_{j>i}^{L}  f(w_i - \mu_j + \gamma) \; .
\>
Formulae (\ref{sol}) is given in terms of a multiple contour integral whose integration
contour for each variable $w_j$ encloses all variables in the set $X^{1, L}$.
It is also worth remarking that similar multiple contour integrals also emerge as solutions of 
the KZ equation \cite{Varchenko_book} and its $q$-deformed version \cite{Etingof_book}.

\subsection{Functional equation for $S_n$}
\label{sec:FES}

The partition function (\ref{pf}) is not the only quantity which satisfy a functional
equation such as the one described in \Secref{sec:FEZ}. Similar equations can also be derived
for scalar products of Bethe vectors as we shall demonstrate.
Although the case of domain wall boundary conditions was introduced in \cite{Korepin82} as a building block of
scalar products, here we shall not follow that approach but instead consider scalar products defined by (\ref{sn})
as an independent quantity.

The derivation of (\ref{FZ}) required two main ingredients: the construction of a suitable
realization of $\gen{\pi}_n$ and the derivation of appropriate Yang-Baxter relations of degree $n$.
In order to apply the same methodology for scalar products $S_n$ we then first need to consider 
the commutation relations from (\ref{RTT}) in the six-vertex model limit as discussed in Remark \ref{limit}.
In what follows we present the ones that will be required,
\<
\label{ABC}
A(\lambda_1) B(\lambda_2) &=& \frac{a(\lambda_2 - \lambda_1)}{b(\lambda_2 - \lambda_1)} B(\lambda_2) A(\lambda_1) - \frac{c(\lambda_2 - \lambda_1)}{b(\lambda_2 - \lambda_1)} B(\lambda_1) A(\lambda_2) \nonumber \\
C(\lambda_1) A(\lambda_2) &=& \frac{a(\lambda_1 - \lambda_2)}{b(\lambda_1 - \lambda_2)} A(\lambda_2) C(\lambda_1) - \frac{c(\lambda_1 - \lambda_2)}{b(\lambda_1 - \lambda_2)} A(\lambda_1) C(\lambda_2) 
\>
\<
\label{DBC}
D(\lambda_1) B(\lambda_2) &=& \frac{a(\lambda_1 - \lambda_2)}{b(\lambda_1 - \lambda_2)} B(\lambda_2) D(\lambda_1) - \frac{c(\lambda_1 - \lambda_2)}{b(\lambda_1 - \lambda_2)} B(\lambda_1) D(\lambda_2) \nonumber \\
C(\lambda_1) D(\lambda_2) &=& \frac{a(\lambda_2 - \lambda_1)}{b(\lambda_2 - \lambda_1)} D(\lambda_2) C(\lambda_1) - \frac{c(\lambda_2 - \lambda_1)}{b(\lambda_2 - \lambda_1)} D(\lambda_1) C(\lambda_2)
\>
\<
\label{BC}
B(\lambda) B(\mu) &=& B(\mu) B(\lambda) \nonumber \\
C(\lambda) C(\mu) &=& C(\mu) C(\lambda) \; .
\>

\paragraph{Yang-Baxter relation of degree $n$.} The relations (\ref{ABC}), (\ref{DBC}) and (\ref{BC})
are a subset of the commutation rules contained in (\ref{RTT}) in the six-vertex model limit and, as
such, they are relations of degree $2$ according to the discussion of \Secref{sec:FEZ}. 
In order to describe the scalar product $S_n$ we need an appropriate relation of degree $n$ which can be 
obtained by the repeated use of (\ref{ABC})-(\ref{BC}). 
The direct inspection of the commutation relations (\ref{ABC})-(\ref{BC}) suggests us to consider the 
quantity 
\[
\label{TA}
T_{A} = \mathop{\overleftarrow\prod}\limits_{1 \le i \le n } C(\lambda_i^{C}) \;  A(\lambda_0) \mathop{\overrightarrow\prod}\limits_{1 \le i \le n } B(\lambda_i^{B}) \; ,
\]
which can be analyzed in at least two different ways through the relations (\ref{ABC}) and (\ref{BC}).
Here we shall restrict our discussion to the following ways of evaluating $T_A$. Firstly, we can move the operator  
$A(\lambda_0)$ in (\ref{TA}) all the way to the right through all the string of operators $B$ with the help of the first relation in (\ref{ABC}).
Alternatively, we can also move the operator $A(\lambda_0)$ to the left by making use of the second relation in (\ref{ABC}).
These two different ways of evaluating the same quantity yields the following Yang-Baxter relation of order $2n+1$,
\<
\label{TAY}
&& \prod_{i=1}^{n} \frac{a( \lambda_i^C - \lambda_0 )}{b( \lambda_i^C - \lambda_0 )} A(\lambda_0) \mathop{\overleftarrow\prod}\limits_{1 \le i \le n } C(\lambda_i^{C})  \mathop{\overrightarrow\prod}\limits_{1 \le i \le n } B(\lambda_i^{B}) \nonumber \\
&& - \sum_{i=1}^n \frac{c( \lambda_i^C - \lambda_0 )}{b( \lambda_i^C - \lambda_0 )} \prod_{\stackrel{j=1}{j \neq i}}^{n} \frac{a( \lambda_j^C - \lambda_i^C )}{b( \lambda_j^C - \lambda_i^C )} A(\lambda_i^C) \mathop{\overleftarrow\prod}\limits_{\stackrel{0 \le j \le n}{j \neq i}} C(\lambda_j^{C}) \mathop{\overrightarrow\prod}\limits_{1 \le j \le n } B(\lambda_j^{B})  = \nonumber \\
&& \prod_{i=1}^{n} \frac{a( \lambda_i^B - \lambda_0 )}{b( \lambda_i^B - \lambda_0 )} \mathop{\overleftarrow\prod}\limits_{1 \le i \le n } C(\lambda_i^{C})  \mathop{\overrightarrow\prod}\limits_{1 \le i \le n } B(\lambda_i^{B}) \; A(\lambda_0)  \nonumber \\
&& - \sum_{i=1}^n \frac{c( \lambda_i^B - \lambda_0 )}{b( \lambda_i^B - \lambda_0 )} \prod_{\stackrel{j=1}{j \neq i}}^{n} \frac{a( \lambda_j^B - \lambda_i^B )}{b( \lambda_j^B - \lambda_i^B )}  \mathop{\overleftarrow\prod}\limits_{1 \le j \le n} C(\lambda_j^{C}) \mathop{\overrightarrow\prod}\limits_{\stackrel{0 \le j \le n}{j \neq i}} B(\lambda_j^{B}) \; A(\lambda_i^B)  \; .
\>
It is important to stress here that the derivation of (\ref{TAY}) also makes explicit use of the relations (\ref{BC}).
The expression (\ref{TAY}) will be left at rest for a while and we shall focus on another quantity. For instance, 
instead of (\ref{TA}) we could have performed the same analysis starting with 
\[
\label{TD}
T_{D} = \mathop{\overleftarrow\prod}\limits_{1 \le i \le n } C(\lambda_i^{C}) \;  D(\lambda_0) \mathop{\overrightarrow\prod}\limits_{1 \le i \le n } B(\lambda_i^{B}) \; .
\]
In that case we need to consider the relations (\ref{DBC}) and (\ref{BC}), and we end up with the following identity,
\<
\label{TDY}
&& \prod_{i=1}^{n} \frac{a( \lambda_0 - \lambda_i^C  )}{b( \lambda_0 - \lambda_i^C )} D(\lambda_0) \mathop{\overleftarrow\prod}\limits_{1 \le i \le n } C(\lambda_i^{C})  \mathop{\overrightarrow\prod}\limits_{1 \le i \le n } B(\lambda_i^{B}) \nonumber \\
&& - \sum_{i=1}^n \frac{c( \lambda_0 - \lambda_i^C  )}{b( \lambda_0 - \lambda_i^C  )} \prod_{\stackrel{j=1}{j \neq i}}^{n} \frac{a( \lambda_i^C - \lambda_j^C )}{b( \lambda_i^C - \lambda_j^C )} D(\lambda_i^C) \mathop{\overleftarrow\prod}\limits_{\stackrel{0 \le j \le n}{j \neq i}} C(\lambda_j^{C}) \mathop{\overrightarrow\prod}\limits_{1 \le j \le n } B(\lambda_j^{B})  = \nonumber \\
&& \prod_{i=1}^{n} \frac{a( \lambda_0 - \lambda_i^B  )}{b( \lambda_0 - \lambda_i^B )} \mathop{\overleftarrow\prod}\limits_{1 \le i \le n } C(\lambda_i^{C})  \mathop{\overrightarrow\prod}\limits_{1 \le i \le n } B(\lambda_i^{B}) \; D(\lambda_0)  \nonumber \\
&& - \sum_{i=1}^n \frac{c( \lambda_0 - \lambda_i^B   )}{b( \lambda_0 - \lambda_i^B )} \prod_{\stackrel{j=1}{j \neq i}}^{n} \frac{a( \lambda_i^B - \lambda_j^B )}{b( \lambda_i^B - \lambda_j^B )}  \mathop{\overleftarrow\prod}\limits_{1 \le j \le n} C(\lambda_j^{C}) \mathop{\overrightarrow\prod}\limits_{\stackrel{0 \le j \le n}{j \neq i}} B(\lambda_j^{B}) \; D(\lambda_i^B)  \; .
\>
Both expressions (\ref{TAY}) and (\ref{TDY}) consist of Yang-Baxter relations of order $2n +1$ and they can be converted into 
functional equations for $S_n$ with a proper choice of $\gen{\pi}_{2n+1}$.

\paragraph{The map $\gen{\pi}_{m}$ for $S_n$.} Taking into account the relations (\ref{Raction}), (\ref{Laction}),
(\ref{TAY}) and (\ref{TDY}) we can readily see that the choice $\ket{\psi} = \ket{\psi'} = \ket{0}$ for the realization
(\ref{pinR}) gives rise to functional relations for the scalar product $S_n$. In order to see that we only need to
apply the map $\gen{\pi}_{2n+1}$ on (\ref{TAY}) and (\ref{TDY}). By doing so we only find terms of the form
\begin{align}
&\gen{\pi}_{2n+1} ( A(v_0^A) \mathop{\overleftarrow\prod}\limits_{1 \le i \le n } C(v_i^{C})  \mathop{\overrightarrow\prod}\limits_{1 \le i \le n } B(v_i^{B}) )& 
\quad & , \quad \gen{\pi}_{2n+1} ( \mathop{\overleftarrow\prod}\limits_{1 \le i \le n } C(v_i^{C})  \mathop{\overrightarrow\prod}\limits_{1 \le i \le n } B(v_i^{B}) \; A(\bar{v}_0^A) ) \nonumber \\
&\gen{\pi}_{2n+1} ( D(v_0^D) \mathop{\overleftarrow\prod}\limits_{1 \le i \le n } C(v_i^{C})  \mathop{\overrightarrow\prod}\limits_{1 \le i \le n } B(v_i^{B}) )&
\quad & , \quad \gen{\pi}_{2n+1} ( \mathop{\overleftarrow\prod}\limits_{1 \le i \le n } C(v_i^{C})  \mathop{\overrightarrow\prod}\limits_{1 \le i \le n } B(v_i^{B}) \; D(\bar{v}_0^D) )  
\end{align}
with parameters $v_0^A , v_0^D , v_i^C \in \{ \lambda_0 , \lambda_1^C , \dots , \lambda_n^C \}$ 
and $\bar{v}_0^A , \bar{v}_0^D , v_i^B \in \{ \lambda_0 , \lambda_1^B , \dots , \lambda_n^B \}$.

Similarly to the case discussed in \Secref{sec:FEZ}, here we also have a map $\gen{\pi}_{2n +1} \mapsto \gen{\pi}_{2n}$ 
induced by the highest weight property of $\ket{0}$. More precisely, from (\ref{Raction}) we obtain 
\<
\label{PNN1}
\gen{\pi}_{2n+1} ( \mathop{\overleftarrow\prod}\limits_{1 \le i \le n } C(v_i^{C})  \mathop{\overrightarrow\prod}\limits_{1 \le i \le n } B(v_i^{B}) \; A(\bar{v}_0^A) ) &&= \nonumber \\
\prod_{j=1}^{L} a(\bar{v}_0^A - \mu_j) && \gen{\pi}_{2n} ( \mathop{\overleftarrow\prod}\limits_{1 \le i \le n } C(v_i^{C})  \mathop{\overrightarrow\prod}\limits_{1 \le i \le n } B(v_i^{B}) ) \nonumber \\
\gen{\pi}_{2n+1} ( \mathop{\overleftarrow\prod}\limits_{1 \le i \le n } C(v_i^{C})  \mathop{\overrightarrow\prod}\limits_{1 \le i \le n } B(v_i^{B}) \; D(\bar{v}_0^A) ) &&= \nonumber \\
\prod_{j=1}^{L} b(\bar{v}_0^A - \mu_j) && \gen{\pi}_{2n} ( \mathop{\overleftarrow\prod}\limits_{1 \le i \le n } C(v_i^{C})  \mathop{\overrightarrow\prod}\limits_{1 \le i \le n } B(v_i^{B}) ) \; . \nonumber \\
\>
On the other hand, the property (\ref{Laction}) yields
\<
\label{PNN2}
\gen{\pi}_{2n+1} ( A(v_0^A) \mathop{\overleftarrow\prod}\limits_{1 \le i \le n } C(v_i^{C})  \mathop{\overrightarrow\prod}\limits_{1 \le i \le n } B(v_i^{B}) ) &&= \nonumber \\
\prod_{j=1}^{L} a(v_0^A - \mu_j) && \gen{\pi}_{2n} ( \mathop{\overleftarrow\prod}\limits_{1 \le i \le n } C(v_i^{C})  \mathop{\overrightarrow\prod}\limits_{1 \le i \le n } B(v_i^{B}) ) \nonumber \\
\gen{\pi}_{2n+1} ( D(v_0^D) \mathop{\overleftarrow\prod}\limits_{1 \le i \le n } C(v_i^{C})  \mathop{\overrightarrow\prod}\limits_{1 \le i \le n } B(v_i^{B}) ) &&= \nonumber \\
\prod_{j=1}^{L} b(v_0^D - \mu_j) && \gen{\pi}_{2n} ( \mathop{\overleftarrow\prod}\limits_{1 \le i \le n } C(v_i^{C})  \mathop{\overrightarrow\prod}\limits_{1 \le i \le n } B(v_i^{B}) ) \; . \nonumber \\
\> 
The terms $\gen{\pi}_{2n} ( \mathop{\overleftarrow\prod}\limits_{1 \le i \le n } C(v_i^{C})  \mathop{\overrightarrow\prod}\limits_{1 \le i \le n } B(v_i^{B}) )$
can now be identified with the scalar products $S_n$ as defined in (\ref{sn}).
In this way the map $\gen{\pi}_m$ above discussed, together with the relations (\ref{TAY}), (\ref{TDY}), (\ref{PNN1}) and (\ref{PNN2}),
leaves us with the following functional equations,
\<
\label{SNAD}
J_0  S_n ( X^{1,n} |  Y^{1,n} ) &+& \sum_{i=1}^n K_i^{(B)} S_n ( X^{0,n}_i |  Y^{1,n} ) + \sum_{i=1}^n K_i^{(C)} S_n ( X^{1,n} |  Y^{0,n}_i ) = 0 \nonumber \\
\tilde{J}_0  S_n ( X^{1,n} |  Y^{1,n} ) &+& \sum_{i=1}^n \tilde{K}_i^{(B)} S_n ( X^{0,n}_i |  Y^{1,n} ) + \sum_{i=1}^n \tilde{K}_i^{(C)} S_n ( X^{1,n} |  Y^{0,n}_i ) = 0  \; .
\>
In their turn the coefficients appearing in (\ref{SNAD}) can be conveniently written as
\<
\label{coeffA}
J_0 &=& \prod_{j=1}^{L} a(\lambda_0 - \mu_j) \left[  \prod_{i=1}^{n} \frac{a(\lambda_i^{C} - \lambda_0)}{b(\lambda_i^{C} - \lambda_0)} - \prod_{i=1}^{n} \frac{a(\lambda_i^{B} - \lambda_0)}{b(\lambda_i^{B} - \lambda_0)} \right] \nonumber \\
K_i^{(B, C)} &=& \alpha_{B, C} \frac{c(\lambda_i^{B,C} - \lambda_0)}{b(\lambda_i^{B,C} - \lambda_0)} \prod_{j=1}^{L} a(\lambda_i^{B, C} - \mu_j) \prod_{\stackrel{j=1}{j \neq i}}^{n} \frac{a(\lambda_j^{B, C} - \lambda_i^{B, C})}{b(\lambda_j^{B, C} - \lambda_i^{B, C})} \; ,
\>
and 
\<
\label{coeffD}
\widetilde{J}_0 &=& \prod_{j=1}^{L} b(\lambda_0 - \mu_j) \left[  \prod_{i=1}^{n} \frac{a(\lambda_0 - \lambda_i^{C})}{b(\lambda_0 - \lambda_i^{C})} - \prod_{i=1}^{n} \frac{a(\lambda_0 - \lambda_i^{B})}{b(\lambda_0 - \lambda_i^{B})} \right] \nonumber \\
\widetilde{K}_i^{(B, C)} &=& \alpha_{B, C} \frac{c(\lambda_0 - \lambda_i^{B,C})}{b(\lambda_0 - \lambda_i^{B,C})} \prod_{j=1}^{L} b(\lambda_i^{B, C} - \mu_j) \prod_{\stackrel{j=1}{j \neq i}}^{n} \frac{a(\lambda_i^{B, C} - \lambda_j^{B, C})}{b(\lambda_i^{B, C} - \lambda_j^{B, C})} \; .
\>
where $\alpha_B =1$ and $\alpha_C = - 1$.

The Remark \ref{X} of \Secref{sec:FEZ} can be immediately extended to the case of scalar products,
and in (\ref{SNAD}) we have employed the notation
\[
S_n ( \lambda_1^{B}, \dots , \lambda_n^{B} | \lambda_1^{C}, \dots , \lambda_n^{C} ) = S_n ( X^{1,n} |  Y^{1,n} )
\]
where $X^{i,j} = \{ \lambda_k^{B} \; : \; i \leq k \leq j \}$ and $Y^{i,j} = \{ \lambda_k^{C} \; : \; i \leq k \leq j \}$.
Moreover, we have also considered the definitions $X^{i,j}_k = X^{i,j} \backslash \{ \lambda_k^{B} \}$ and $Y^{i,j}_k = Y^{i,j} \backslash \{ \lambda_k^{C} \}$.
This possibility is granted by the following lemma.
\begin{lemma}[Doubly symmetric function]
The scalar product $S_n ( \lambda_1^{B}, \dots , \lambda_n^{B} | \lambda_1^{C}, \dots , \lambda_n^{C} )$
is symmetric under the permutation of variables $\lambda_i^B \leftrightarrow \lambda_j^B$ and  
$\lambda_i^C \leftrightarrow \lambda_j^C$ performed independently. More precisely we have
\<
S_n ( \dots , \lambda_i^B , \dots , \lambda_j^B , \dots | \lambda_1^{C}, \dots , \lambda_n^{C} ) = S_n ( \dots , \lambda_j^B , \dots , \lambda_i^B , \dots | \lambda_1^{C}, \dots , \lambda_n^{C} ) 
\>
and
\<
S_n (\lambda_1^{C}, \dots , \lambda_n^{C} | \dots , \lambda_i^B , \dots , \lambda_j^B , \dots ) = S_n ( \lambda_1^{C}, \dots , \lambda_n^{C} | \dots , \lambda_j^B , \dots , \lambda_i^B , \dots ) \; .
\>
\end{lemma}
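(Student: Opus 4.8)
The plan is to derive the claimed symmetry directly from the commutation relations \eqref{BC}, in complete analogy with the proof of the preceding lemma. The starting point is the defining expression \eqref{sn}, in which the $C$-operators appear as an ordered product acting from the left and the $B$-operators as an ordered product acting from the right. Since \eqref{BC} states that $B(\lambda)B(\mu)=B(\mu)B(\lambda)$ and $C(\lambda)C(\mu)=C(\mu)C(\lambda)$, the ordered product $\mathop{\overrightarrow\prod}\limits_{1\le i\le n}B(\lambda_i^{B})$ is invariant under an arbitrary permutation of $\lambda_1^{B},\dots,\lambda_n^{B}$, and likewise $\mathop{\overleftarrow\prod}\limits_{1\le i\le n}C(\lambda_i^{C})$ is invariant under an arbitrary permutation of $\lambda_1^{C},\dots,\lambda_n^{C}$. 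Because the two families of operators sit on opposite sides of the matrix element and never need to be interchanged with one another, these two invariances hold independently; sandwiching between $\bra{0}$ and $\ket{0}$ as in the realization \eqref{pinR} transfers them verbatim to $S_n$, yielding both displayed identities.

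Concretely, first I would spell out the elementary fact that pairwise commutativity of a finite family of operators makes any ordered product of them independent of the ordering: repeated use of \eqref{BC} lets one carry out an arbitrary adjacent transposition, and adjacent transpositions generate the full symmetric group $S_n$. Then I would simply observe that in \eqref{sn} the $B$'s and the $C$'s are segregated onto the right and the left of the expression, so that the two $S_n$-actions on the parameters commute as abstract relabellings; the first identity (permutation of the $\lambda_i^{B}$) and the second (permutation of the $\lambda_i^{C}$) then follow by the same one-line argument, applied to the $B$-string and the $C$-string respectively. An alternative, slightly less direct route would be to read the symmetry off from the functional equations \eqref{SNAD}: the coefficient structure \eqref{coeffA}--\eqref{coeffD} is manifestly invariant under a permutation of the $\lambda_i^{B}$ (respectively $\lambda_i^{C}$) once one simultaneously permutes the corresponding arguments of $S_n$, and uniqueness of the solution would then force the symmetry; I would nonetheless present the operator argument as the main proof, since it needs no input about the solvability of \eqref{SNAD}.

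I do not anticipate a genuine obstacle here. The only point deserving a careful line is the verification that the $BB$ and $CC$ relations in \eqref{BC} really are among the sixteen commutation rules encoded by \eqref{RTT} in the six-vertex limit, which is immediate from the explicit form of the $\mathcal{R}$-matrix in Remark~\ref{limit}, together with the observation that the ordered-product notation $\mathop{\overrightarrow\prod}\limits$, $\mathop{\overleftarrow\prod}\limits$ is unambiguous so that the rearrangement is legitimate. As with the previous lemma, the statement can also simply be attributed to the construction in \cite{deGier_Galleas11}.
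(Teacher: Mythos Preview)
Your proposal is correct and follows exactly the approach of the paper: the symmetry is read off directly from the commutation rules \eqref{BC}, with the alternative route via the functional equations \eqref{SNAD} mentioned as a second possibility. The paper's own proof is a one-line remark to this effect, so your more detailed unpacking of the adjacent-transposition argument is already more than what is presented there.
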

\begin{proof}
The proof follows directly from the commutation rules (\ref{BC}). Alternatively, one can demonstrate it from the analysis of (\ref{SNAD})
as performed in \cite{Galleas_SC}.
\end{proof}

\paragraph{Solution.} The same discussion of \Secref{sec:FEZ} concerning the resolution of the functional equation (\ref{FZ})
is also valid for the set of equations (\ref{SNAD}). For instance, we can readily see that each equation in (\ref{SNAD})
is scale invariant and linear. Nevertheless, there is one main difference concerning (\ref{SNAD}) which is the fact
that here we have two equations instead of only one. This might suggest that one of the equations is redundant but the direct inspection
of our system of equations reveals that this is not the case. In fact, the scalar products we are interested consist
of certain polynomials and the use of a polynomial ansatz for solving (\ref{SNAD}) shows that only one equation is not
able to fix all the coefficients. On the other hand, the simultaneous resolution of both equations indeed fix the coefficients 
up to an overall multiplicative factor. 

Although the process of solving the system of equations (\ref{SNAD}) is more involving than
solving the single equation (\ref{FZ}), the same methodology still applies. The solution of (\ref{SNAD})
was firstly obtained in \cite{Galleas_SC} and here we restrict ourselves to presenting only the final expression. 
The scalar product $S_n$ is then given by,
\<
\label{LUA}
S_n ( X^{1,n} | Y^{1,n} ) =  \oint \dots \oint \prod_{i=1}^{n} \frac{\dd w_i}{2 \ii \pi} \frac{\dd \bar{w}_i}{2 \ii \pi} \frac{H(w_1 , \dots , w_n | \bar{w}_1 , \dots , \bar{w}_n)}{ \prod_{i,j=1}^{n} b(w_i - \lambda_j^C) b(\bar{w}_i - \lambda_j^B)} \; ,
\>
where the function $H$ explicitly reads,
\<
\label{HLUA}
&& H(w_1 , \dots , w_n | \bar{w}_1 , \dots , \bar{w}_n) = \nonumber \\
&&  (-1)^{L n + \frac{n(n+1)}{2}} c^{2 n} \frac{\displaystyle \prod_{j > i}^{n} b(w_i - w_j)^2 b(\bar{w}_i - \bar{w}_j)^2 a(w_j - \mu_i) a(\bar{w}_j - \mu_i)}{\prod_{i=1}^{n} b(w_i - \mu_i) b(\bar{w}_i - \mu_i)} \prod_{i=1}^{n} R_i^{-1} \Lambda_i \; , \nonumber \\
\>
with functions $R_i$ and $\Lambda_i$ given by
\<
\label{Ri}
R_i &=& \prod_{k=i}^{n} \frac{a(w_k - \mu_i)}{b(w_k - \mu_i)} - \prod_{k=i}^{n} \frac{a(\bar{w}_k - \mu_i)}{b(\bar{w}_k - \mu_i)} \nonumber \\
\Lambda_i &=& \prod_{k=i}^{L} a(\bar{w}_i - \mu_k) b(\mu_k - w_i) \prod_{k=i+1}^{n} \frac{a(w_i - w_k)}{b(w_i - w_k)} \frac{a(\bar{w}_k - \bar{w}_i)}{b(\bar{w}_k - \bar{w}_i)} \nonumber \\ 
&& - \prod_{k=i}^{L} a(w_i - \mu_k) b(\mu_k - \bar{w}_i) \prod_{k=i+1}^{n} \frac{a(w_k - w_i)}{b(w_k - w_i)} \frac{a(\bar{w}_i - \bar{w}_k)}{b(\bar{w}_i - \bar{w}_k)} \; .
\>
The formulae (\ref{LUA}) is commonly denominated off-shell scalar product as it is valid
for arbitrary complex parameters $\lambda_i^B$ and $\lambda_i^C$. 
In its turn, when the variables $\lambda_i^B$ are constrained by Bethe ansatz equations, see \cite{Korepin_book} for instance,
the function $S_n$ receives the name on-shell scalar product and the analysis of (\ref{SNAD}) in that case
has also been performed in \cite{Galleas_SC}.

\section{Partial differential equations}
\label{sec:PDE}

The functional equations (\ref{FZ}) and (\ref{SNAD}) contain a very rich structure which 
is not apparent at first sight. In order to illustrate how these hidden
structures emerge let us consider Eq. (\ref{FZ}) in the standard six-vertex model limit.
In that case we have $\mathcal{Z}_{\theta} \rightarrow Z$ and are left with the following equation,
\[
\label{FZT}
\bar{M}_0 \; Z(X^{1,L}) + \sum_{i=1}^{L} \bar{N}_i \; Z(X^{0,L}_i) = 0 \; ,
\]
with coefficients $\bar{M}_0$ and $\bar{N}_i$ given by
\<
\label{cfc}
\bar{M}_0 &=& \prod_{j=1}^{L} b(\lambda_0 - \mu_j) - \prod_{j=1}^{L} a(\lambda_0 - \mu_j) \prod_{j=1}^{L} \frac{a(\lambda_j - \lambda_0)}{b(\lambda_j - \lambda_0)} \nonumber \\
\bar{N}_i &=& \frac{c(\lambda_i - \lambda_0)}{b(\lambda_i - \lambda_0)} \prod_{j=1}^{L} a(\lambda_i - \mu_j) \prod_{\stackrel{j=1}{j \neq i}}^{L} \frac{a(\lambda_j - \lambda_i)}{b(\lambda_j - \lambda_i)} \; .
\>
In what follows we intend to demonstrate that (\ref{FZT}) encodes a family of partial differential
equations and for that we need to introduce some extra definitions and lemmas.

\begin{mydef} \label{Dia}
Let $f$ be a complex valued function $f(z) \in \mathbb{C}[z]$ and $z=(z_1, \dots , z_n) \in \mathbb{C}^n$.
Then for $\alpha \notin \{1, 2 , \dots, n \}$ we define the operator $D_i^{\alpha}$ as
\[
\label{dia}
D_i^{\alpha} \; : \qquad f(z_1 , \dots , z_i , \dots , z_n) \mapsto f(z_1 , \dots , z_{\alpha} , \dots , z_n)   \; .
\] 
The operator $D_i^{\alpha}$ essentially replaces the variable $z_i$ by $z_{\alpha}$. It is worth mentioning that
$D_i^{\alpha}$ had been previously introduced in \cite{Galleas11}.
\end{mydef}

\begin{lemma}[Differential realization] The operator $D_i^{\alpha}$ admits the realization
\[
\label{diff}
D_i^{\alpha} = \sum_{k=0}^{m} \frac{(z_{\alpha} - z_i)^k}{k!} \frac{\partial^k}{\partial z_i^k}
\]
when its action is restricted to the ring of multivariate polynomials of degree $m$.
\end{lemma}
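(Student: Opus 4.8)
The plan is to recognize the operator $D_i^{\alpha}$ as a finite-difference shift operator acting in the single variable $z_i$, since it sends $z_i \mapsto z_\alpha$ while leaving every other variable untouched. Concretely, restricted to polynomials of degree at most $m$ in each variable, I would fix all variables except $z_i$ and regard $f$ as an element of $\mathbb{C}[z_i]$ of degree at most $m$; then $D_i^{\alpha}$ is precisely the operator of evaluation-shift $z_i \mapsto z_i + (z_\alpha - z_i)$, i.e. the Taylor shift by the increment $h := z_\alpha - z_i$.

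The key step is then Taylor's theorem for polynomials: for any $g(z_i) \in \mathbb{C}[z_i]$ of degree at most $m$ and any shift $h$, one has the \emph{exact} (finite, no remainder) expansion
\[
g(z_i + h) = \sum_{k=0}^{m} \frac{h^k}{k!}\, \frac{\partial^k g}{\partial z_i^k}(z_i) \; ,
\]
because the $(m+1)$-st and higher derivatives of $g$ vanish identically. Substituting $h = z_\alpha - z_i$ and noting that $\partial^k/\partial z_i^k$ commutes with multiplication by the (here) $z_i$-independent quantities built from the other variables, this reproduces exactly the claimed formula \eqref{diff}. For a general multivariate $f$ one applies this argument coordinate-wise in $z_i$ only; since $\alpha \notin \{1,\dots,n\}$, the variable $z_\alpha$ is genuinely new and $h = z_\alpha - z_i$ is not one of the differentiation variables, so there is no ambiguity.

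I would present this in two short movements: first state and prove the one-variable polynomial Taylor identity (or simply cite it, as it is classical), then reduce the multivariate case to it by freezing the other coordinates and invoking linearity of $D_i^\alpha$ together with the definition \eqref{dia}. The only point requiring a little care — and the one I would flag as the main (mild) obstacle — is the truncation of the sum at $k = m$: one must use that the polynomial degree in $z_i$ is bounded by $m$ so that the Taylor series terminates, which is exactly the hypothesis "action restricted to the ring of multivariate polynomials of degree $m$." Without that restriction the right-hand side of \eqref{diff} would need infinitely many terms, so the finiteness is not incidental but the crux of why the realization is well defined as written.
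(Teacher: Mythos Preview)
Your proposal is correct and follows essentially the same approach as the paper: both invoke Taylor's theorem for polynomials in the single variable $z_i$, use the degree bound to truncate the expansion at order $m$, and rely on $\alpha \notin \{1,\dots,n\}$ to avoid any clash between the shift parameter and the differentiation variable. If anything, your version is slightly more direct: the paper expands $f(\dots,z_i,\dots)$ around $z_\alpha$ (obtaining the inverse relation with derivatives $\partial/\partial z_\alpha$) and then identifies the bracketed operator with $D_i^\alpha$ by swapping the roles of $i$ and $\alpha$, whereas you expand $f$ around $z_i$ with increment $h=z_\alpha - z_i$ and land immediately on the stated formula.
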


\begin{proof}
Let $f = f(z_1 , z_2 , \dots , z_n)$ and $\mathbb{K}^m [z_1, z_2 , \dots , z_n]$ be the ring of polynomials in $z_1 , \dots , z_n$
with degree $m$. The ring $\mathbb{K}^m [z_1, z_2 , \dots , z_n]$ shall be simply denoted as $\mathbb{K}^m [z]$ and the condition
$f \in \mathbb{K}^m [z]$  implies 
\[
\label{dk}
\frac{\partial^k f}{\partial z_i^k} = 0 \qquad \mbox{if} \quad k > m  \; .
\]
Next we consider the Taylor expansion of $f$ in the variable $z_i$ around the point $z_{\alpha}$.
Due to (\ref{dk}) the expansion is truncated and convergent. Thus we have,
\<
\label{taylor}
f &=& f(\dots, z_{i-1} , z_{\alpha} , z_{i+1} , \dots) + \left. \frac{\partial f}{\partial z_i} \right|_{i = \alpha} (z_i - z_{\alpha}) \nonumber \\
&& + \frac{1}{2} \left. \frac{\partial^2 f}{\partial z_i^2} \right|_{i = \alpha} (z_i - z_{\alpha})^2 + \dots + \frac{1}{m!} \left. \frac{\partial^m f}{\partial z_i^m} \right|_{i = \alpha} (z_i - z_{\alpha})^m \; .
\>
The expression (\ref{taylor}) holds for indexes $i \in \{1 , 2, \dots , n \}$, and as long as $\alpha \notin \{1 , 2, \dots , n \}$
we can write
\[
\label{ai}
\left. \frac{\partial^k f}{\partial z_i^k} \right|_{i = \alpha} = \frac{\partial^k }{\partial z_{\alpha}^k} f(\dots, z_{i-1} , z_{\alpha} , z_{i+1} , \dots) \; .
\]
In this way formulas (\ref{taylor}) and (\ref{ai}) yields the following relation,
\[
\label{ff}
f(\dots, z_{i-1} , z_{i} , z_{i+1} , \dots) = \left[ \sum_{k=0}^{m} \frac{(z_i - z_{\alpha})^k}{k!} \frac{\partial^k }{\partial z_{\alpha}^k} \right] f(\dots, z_{i-1} , z_{\alpha} , z_{i+1} , \dots) \; .
\]
The term inside the brackets in (\ref{ff}) performs the operation (\ref{dia}) and we thus have the differential realization (\ref{diff}).
\end{proof}

Now the functional equation (\ref{FZT}) can be written in terms of operators $D_i^{\alpha}$. For that
we only need to consider $n=L$ and $z_i = \lambda_i$. By doing so (\ref{FZT}) becomes $\mathfrak{L}(\lambda_0) Z(X^{1,L}) = 0$
where
\[
\label{opl}
\mathfrak{L}(\lambda_0)  = \bar{M}_0 + \sum_{i=1}^{L} \bar{N}_i \; D_i^{0} \; .
\]
Some remarks are required at this stage. For instance, although the functional equation (\ref{FZT}) 
depends on the set of variables $X^{0,L}$, the use of Definition \ref{Dia} localizes the whole dependence with
the variable $\lambda_0$ in the operator $\mathfrak{L}$. It is also important to stress 
here that we can not immediately use the differential realization (\ref{diff}) in (\ref{opl}) since 
it is valid only for functions in $\mathbb{K}^m [z]$. Nevertheless, in what follows we shall discuss how (\ref{diff}) can
be adjusted for $Z(X^{1,L})$.

\begin{lemma}[Polynomial structure] \label{POL}
In terms of variables $x_i = e^{2 \lambda_i}$ the partition function $Z(X^{1,L})$
is of the form
\[
\label{pol}
Z(X^{1,L}) = \prod_{j=1}^{L} x_j^{\frac{1-L}{2}} \bar{Z}(x_1 , \dots , x_L) \; ,
\]
where $\bar{Z}(x_1 , \dots , x_L)$ is a polynomial of degree $L-1$ in each variable $x_i$
separately.
\end{lemma}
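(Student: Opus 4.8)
The plan is to read the claimed structure directly off the operator representation of the partition function. In the six-vertex limit of Remark~\ref{limit}, formula~\eqref{pf} becomes $Z(X^{1,L}) = \bra{\bar 0}\,\mathop{\overrightarrow\prod}\limits_{1\le j\le L} B(\lambda_j)\,\ket 0$, where $B(\lambda)$ is the $(1,2)$ entry of the monodromy matrix $T(\lambda) = \mathop{\overrightarrow\prod}\limits_{1\le k\le L} R_{ak}(\lambda-\mu_k)$ built from the $R$-matrix with weights $a(\lambda)=\sinh(\lambda+\gamma)$, $b(\lambda)=\sinh\lambda$, $c(\lambda)=\sinh\gamma$ of \eqref{AB}. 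Since the covectors $\bra{\bar 0}$, $\ket 0$, the inhomogeneities $\mu_k$ and $\gamma$ are all independent of $\lambda_j$, the entire dependence of $Z$ on a fixed $\lambda_j$ sits inside the single factor $B(\lambda_j)$, and linearly so; hence it suffices to identify the span of functions of $\lambda$ attained by the matrix elements of $B(\lambda)$, and then to substitute $x_j=e^{2\lambda_j}$.

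To that end I would expand $T(\lambda)$ in the auxiliary space $\mathbb{V}_a\cong\mathbb{C}^2$ as a sum over ``paths'' $1=a_0,a_1,\dots,a_L=2$ of auxiliary indices, one factor per $R_{ak}(\lambda-\mu_k)$. The diagonal auxiliary entries of $R_{ak}(\lambda-\mu_k)$ lie in $\mathrm{span}_{\mathbb C}\{e^{\lambda},e^{-\lambda}\}$ (they equal $\sinh(\lambda-\mu_k+\gamma)$ or $\sinh(\lambda-\mu_k)$ times a site projector), while the off-diagonal entries are the $\lambda$-independent constant $\sinh\gamma$ times a site raising/lowering operator. A path connecting opposite auxiliary indices must flip an odd number of times, hence it uses $j\ge 1$ off-diagonal steps (an odd number) and therefore $k=L-j\le L-1$ diagonal steps; in the corresponding product the $k$ diagonal steps generate $e^{m\lambda}$ with $m\equiv k\equiv L-1\pmod 2$ and $|m|\le k\le L-1$, the off-diagonal steps contributing only constants. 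Summing over all paths shows every matrix element of $B(\lambda)$ is a $\mathbb C$-linear combination, with operator coefficients on $(\mathbb C^2)^{\otimes L}$, of the exponentials $e^{m\lambda}$ with $m\in\{-(L-1),-(L-3),\dots,L-3,L-1\}$; equivalently $B(\lambda)=e^{-(L-1)\lambda}\,P(e^{2\lambda})$ with $P$ an operator-valued polynomial of degree $\le L-1$.

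Applying this for $j=1,\dots,L$ and using $e^{-(L-1)\lambda_j}=x_j^{(1-L)/2}$, the representation of $Z$ collapses to $Z(X^{1,L})=\prod_{j=1}^{L}x_j^{(1-L)/2}\,\bar Z(x_1,\dots,x_L)$ with $\bar Z=\bra{\bar 0}\,\mathop{\overrightarrow\prod}\limits_{1\le j\le L}P(x_j)\,\ket 0$ manifestly a polynomial of degree $\le L-1$ in each $x_j$ separately, which is \eqref{pol}. To see the degree equals $L-1$ rather than merely at most $L-1$, I would isolate the coefficient of $x_j^{L-1}$ in $\bar Z$: it is obtained by retaining only the top exponential $e^{(L-1)\lambda_j}$ of $B(\lambda_j)$, i.e.\ the single-flip paths in which every diagonal factor contributes its $e^{\lambda_j}$-part, and one checks this contribution to $\bra{\bar 0}\cdots\ket 0$ does not vanish --- most cleanly by matching it to the explicit leading behaviour of the partition function as $\lambda_j\to\infty$ established in \cite{Galleas_2012}, or by a short induction on $L$.

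The change of variables, and the passage from ``polynomial in each variable separately'' to a joint polynomial, are routine (indeed $Z$ is patently a finite Laurent polynomial in the $e^{\lambda_j}$ once $\prod_jB(\lambda_j)$ is expanded). The step that needs care is the combinatorial bookkeeping of the auxiliary-space path expansion --- in particular the parity constraint on the exponents and the bound $k\le L-1$ on the number of diagonal steps --- together with, if one insists on the sharp degree, the non-vanishing of the leading coefficient, which I expect to be the only genuine obstacle and which I would settle via the known large-$\lambda_j$ asymptotics rather than through a direct determinant evaluation.
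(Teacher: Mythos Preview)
Your argument is correct and is essentially the standard direct proof. The paper itself does not give an argument here: its ``proof'' consists of the single sentence ``A detailed proof can be found in \cite{Korepin82} and \cite{Galleas10}.'' What you have written is precisely the kind of argument those references contain --- reading the exponential content of $B(\lambda)$ off the auxiliary-space path expansion of the monodromy matrix, using that the diagonal local blocks are in $\mathrm{span}\{e^{\lambda},e^{-\lambda}\}$ while the off-diagonal ones are $\lambda$-independent, and invoking the parity constraint (an odd number of flips from auxiliary index $1$ to $2$) to obtain exponents in $\{-(L-1),-(L-3),\dots,L-1\}$. Your handling of the sharp degree via the large-$\lambda_j$ asymptotics is also the cleanest route and matches what is done in \cite{Galleas_2012}. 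In short, you have supplied the proof the paper chose to outsource.
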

\begin{proof}
A detailed proof can be found in \cite{Korepin82} and \cite{Galleas10}.
\end{proof}

Lemma \ref{POL} is telling us that $Z(X^{1,L})$ consists of a multivariate polynomial
up to an overall multiplicative factor when the appropriate variable is considered. 
As a matter of fact we have $\bar{Z}(x_1 , \dots , x_L) \in \mathbb{K}^{L-1} [x]$ and therefore the realization 
(\ref{diff}) can be employed only for $\bar{Z}$. Due to that it is convenient to define the functions 
\[
\label{mnbar}
\check{M}_0 = \bar{M}_0 \prod_{j=1}^{L} x_j^{\frac{1-L}{2}} \qquad \mbox{and} \qquad \check{N}_i = \bar{N}_i \prod_{\stackrel{j=0}{j \neq i}}^{L} x_j^{\frac{1-L}{2}}
\]
in such a way that (\ref{FZT}) becomes $\bar{\mathfrak{L}}(x_0) \bar{Z}(x_1, \dots , x_L) = 0$ with
\[
\label{bopl}
\bar{\mathfrak{L}}(x_0)  = \check{M}_0 + \sum_{i=1}^{L} \check{N}_i \; D_i^{0} \; .
\]
Now the formulae (\ref{diff}) can be substituted into (\ref{bopl}) \footnote{Here we need to set $z_i = x_i$ due to the
change of variables discussed in Lemma \ref{POL}.} leaving us with the expression
\[
\bar{\mathfrak{L}}(x_0)  = \sum_{k=0}^{L-2} x_0^k \; \Omega_{k} \;\; .
\]
The coefficients $\Omega_k$ are differential operators whilst $\bar{\mathfrak{L}}(x_0)$ is a polynomial of
degree $L-2$ in the variable $x_0$. In this way the equation $\bar{\mathfrak{L}}(x_0) \bar{Z}(x_1, \dots , x_L) = 0$
needs to be independently satisfied for each power in $x_0$ which leaves us with the following family of 
partial differential equations,
\[
\label{omega}
\Omega_{k} \; \bar{Z}(x_1, \dots , x_L)  = 0 \qquad 0 \leq k \leq L-2 \; .
\]
Although the explicit form of the operators $\Omega_k$ can be written down for any length $L$, it is usually
given by cumbersome expressions for most of the indexes $k$. Fortunately the situation is different for $k=L-2$ 
and the leading term coefficient $\Omega_{L-2}$ exhibits a compact expression given by
\[
\label{om}
\Omega_{L-2} = \sum_{i=1}^{L} \bar{a}(x_i , y_i) - \frac{q^{2(1-L)}}{(L-1)!} \sum_{i=1}^{L} \prod_{j=1}^{L} \bar{a}(x_i , y_j) \prod_{\stackrel{j=1}{j \neq i}}^{L} \frac{\bar{a}(x_j , x_i)}{\bar{b}(x_j , x_i)} \frac{\partial^{L-1}}{\partial x_i^{L-1}}  \; .
\]
The expression (\ref{om}) takes into account the further conventions $q = e^{\gamma}$, $y_i = e^{2 \mu_i}$ and the 
remaining functions are then defined as $\bar{a}(x,y) = x q^2 - y$ and $\bar{b}(x,y) = x  - y$.

From (\ref{om}) we can see that the operator $\Omega_{L-2}$ exhibits some very interesting characteristics.
For instance, it naturally decomposes into two kinds of terms and it is tempting to interpret it as the 
hamiltonian of a many-body system. Although $\Omega_{L-2}$ contains higher order derivatives, the first term
in the RHS of (\ref{om}) could be thought of as `potential energy' while the second term can be regarded as
`kinetic energy'. The most obvious problem with this interpretation is that the interaction factor 
$\prod_{j \neq i}^{L} \frac{\bar{a}(x_j , x_i)}{\bar{b}(x_j , x_i)}$ appears in the `kinetic energy' term and it is
not clear if a change of variables could have this issue properly fixed. Nevertheless, taking into account this analogy 
it is sensible to consider the eigenvalue problem for the operator (\ref{om}), i.e. $\Omega_{L-2} \Psi = \Lambda \Psi$.
In this way the partition function $\bar{Z}$ can be regarded as the null eigenvalue wave function associated with
$\Omega_{L-2}$.

\section{Concluding remarks}
\label{sec:conclusion}

In this article we have described a mechanism allowing to extract functional equations
satisfied by certain partition functions of two-dimensional lattice models directly
from the Yang-Baxter algebra. More precisely, we have applied this method for the elliptic 
Eight-Vertex-SOS model with domain wall boundaries \cite{Galleas_2012} and for scalar products of
Bethe vectors \cite{Galleas_SC}. For those systems we have obtained functional relations 
satisfied by their partition functions whose solution are then given in terms of multiple 
contour integrals.

The class of functional equations we describe here share some similarities concerning
their structure with the classical Knizhnik-Zamolodchikov equation \cite{KZ_1984} as discussed in 
\Secref{sec:FEZ}. This similarity seems to extend to their solutions as multiple contour
integrals are also convenient to describe solutions of KZ equations \cite{Varchenko_book}. 

Although classical KZ equations consist of a system of partial differential equations,
whilst here we are dealing with functional equations, in \Secref{sec:PDE} we have also
demonstrated that there is a family of partial differential equations underlying
our functional relations obtained from the Yang-Baxter algebra.
Interestingly, one member of this family exhibits a structure which resembles that of a generalized
Schr\"odinger equation for a quantum many-body hamiltonian. 

Concerning this algebraic-functional approach, it is fair to say that this method
is still under development as there are still many opened questions. For instance, 
motivated by the similarities shared with the classical KZ equation, one could ask if 
there is an analogous of the whole theory of KZ equations \cite{Varchenko_book, Etingof_book}
for the functional equations presented here.
Moreover, as far as the computation of physical quantities are concerned,
one important problem is the evaluation of the model free-energy per site in
the thermodynamical limit from the multiple contour integrals presented here.
This analysis would not only give us access to the model physical properties
but also help us to understand the influence of boundary conditions for vertex
and SOS models \cite{Rosengren_2011, Justin_2000}.

\section{Acknowledgements}
\label{sec:ack}
The author is supported by the Netherlands Organization for Scientific
Research (NWO) under the VICI grant 680-47-602. The work of W. Galleas is also
part of the ERC Advanced grant research programme No. 246974, 
{\it ``Supersymmetry: a window to non-perturbative physics"}.



\bibliographystyle{hunsrt}
\bibliography{references}

\end{document}